\newtheorem{definition}{Definition}
\newtheorem{theorem}{Theorem}
\newtheorem{proposition}{Proposition}
\newtheorem{remark}{Remark}
\def\S{\mathbf{A}}
\def\0{\mathbf{0}}
\def\S{\mathcal{S}}
\begin{document}
\title{\huge Analytical Modeling of the Path-Loss for Reconfigurable Intelligent Surfaces -- Anomalous Mirror or Scatterer ?}
\author{\normalsize{{Marco Di Renzo$^{(1)}$, Fadil Habibi Danufane$^{(1)}$, Xiaojun Xi$^{(1)}$, Julien de Rosny$^{(2)}$, and Sergei Tretyakov$^{(3)}$}\\
{{$^{(1)}$ Universit\'e Paris-Saclay, CNRS, CentraleSup\'elec, Laboratoire des Signaux et Syst\`emes, Gif-sur-Yvette, France} \\
{$^{(2)}$ ESPCI, Institut Langevin, CNRS, Paris, France} \\
{$^{(3)}$ Aalto University, Dept. of Electronics and Nanoengineering, Helsinki, Finland} \\
{email: marco.direnzo@centralesupelec.fr}} \vspace{-0.8cm}  }}
\maketitle

\begin{abstract}
Reconfigurable intelligent surfaces (RISs) are an emerging field of research in wireless communications. A fundamental component for analyzing and optimizing RIS-empowered wireless networks is the development of simple but sufficiently accurate models for the power scattered by an RIS. By leveraging the general scalar theory of diffraction and the Huygens-Fresnel principle, we introduce simple formulas for the electric field scattered by an RIS that is modeled as a sheet of electromagnetic material of negligible thickness. The proposed approach allows us to identify the conditions under which an RIS of finite size can or cannot be approximated as an anomalous mirror. Numerical results are illustrated to confirm the proposed approach.
\end{abstract}

\vspace{-0.4cm}
\section{Introduction}
In contemporary wireless networks, transmitters and receivers can be programmed and controlled for optimizing the system performance. The environmental objects (buildings, walls, ceilings, etc.) that constitute the wireless environment cannot, on the other hand, be customized based on the network conditions. This status quo has recently been challenged by the emerging technology of reconfigurable intelligent surfaces (RISs) -- Thin sheets of electromagnetic materials that are capable of shaping the radio waves in arbitrary ways \cite{Marco-Smart-radio-environments}, \cite{Zappone-Wireless-Network-Design}. The overarching vision consists of coating the environmental objects with RISs and optimizing their properties, in order to, e.g., reflect an impinging radio wave towards a desired direction with the objective of capitalizing from multipath propagation rather than being negatively affected by it \cite{Liaskos2}-\cite{Holographic}.

In simple terms, an RIS is the two-dimensional equivalent of a reconfigurable meta-material, and is made of elementary elements called scattering particles or meta-atoms \cite{Yu-Light-propagation}. Depending on the arrangement and configuration of the scattering particles, an RIS is capable of altering the wavefront of the radio waves impinging upon it. For example, RISs can modify the direction of the reflected or refracted waves and their polarization, or can encode data onto the shape of the scattered waves \cite{Asadchy-Perfect-control,Estakhri-Wave-front,Liu-Intelligent}. The two-dimensional nature of RISs make them easier to design, less lossy, less expensive, and easier to deploy than their three-dimensional counterpart. Broadly speaking, RISs are special surfaces that are engineered to possess properties that cannot be found in surfaces made of naturally occurring materials \cite{Liu-Intelligent}. Thanks to these properties, RISs are receiving major attention from the wireless community, and are considered to be the key enabler of the emerging concept of smart radio environments \cite{Marco-Smart-radio-environments}.

A major open research issue for analyzing the ultimate performance limits, optimizing the operation, and assessing the advantages and limitations of RIS-empowered wireless networks is the development of simple but sufficiently accurate models for the power received at a given location in space when a transmitter emits radio waves that illuminate an RIS. Recently, a few research works have tackled this research issue. In \cite{Tang-RIS}, the authors have performed a measurement campaign in an anechoic chamber and have shown that the power reflected from an RIS follows a scaling law that depends on many parameters, including the size of the RIS, the mutual distances between the transmitter/receiver and the RIS (i.e., near-field vs. far-field conditions), and whether the RIS is used for beamforming or broadcasting applications. In \cite{Bucheli-RIS-bridging}, the authors have employed antenna theory to compute the electric field in the near-field and far-field of a finite-size RIS, and have proved that an RIS is capable of acting as an anomalous mirror in the near-field of the array. The results are obtained numerically and no explicit analytical formulation of the received power as a function of the distance is given. Similar results have been obtained in \cite{Ellingson-Path-loss}. In \cite{Khawaja-Coverage}, the power measured from passive reflectors in the millimeter-wave frequency band is compared against ray tracing simulations. By optimizing the area of the surface that is illuminated, it is shown that a finite-size passive reflector can act as an anomalous mirror. The study in \cite{Emil} relies on the assumption of plane waves and is valid only in the far-field of the RIS (i.e., long distances).

In this paper, we leverage the general scalar theory of diffraction and the Huygens-Fresnel principle, and introduce simple closed-form expressions to compute the power reflected from an RIS as a function of the distance between the transmitter/receiver and the RIS, the size of the RIS, and the phase transformation applied by the RIS. With the aid of the stationary phase method, we identify sufficient conditions under which an RIS acts as an anomalous mirror, and, therefore, the received power decays as a function of the reciprocal of the sum of the distances between the transmitter and the RIS, and the RIS and the receiver. For simplicity, the analytical formulas are reported without proof and for a one-dimensional RIS. The proofs, discussions on the boundary conditions to solve Maxwell's equations, the impact of the direct link, the analysis of refraction, and two-dimensional RISs can be found in the companion journal paper \cite{Danufane-TBA}.

The rest of this paper is organized as follows. In Section II, the system model is introduced. In Section III, the analytical formulation of the electric field emitted by a point source and scattered by a finite-size RIS is reported. Explicit expressions of the electric field in the near-field and far-field are given. In Section IV, numerical results are provided to illustrate the scaling laws of the received power as a function of the transmission distances. Finally, Section V concludes this paper.

\section{System Model}
In a two-dimensional space, we consider a system that consists of a transmitter (Tx), a receiver (Rx), and a flat surface ($\S$) of zero-thickness. Without loss of generality, we assume that $\S$ is located such that its center coincides with the origin. Furthermore, $\S$ lies in the $x$-axis and spans along $[-L,L]$, i.e., $\S = \left\{(x,0): -L \leq x \leq L\right\}$. In other words, $\S$ is a straight line. The locations of Tx and Rx are denoted by $(x_T,y_T)$ and $(x_R,y_R)$, respectively. We consider only the scenario where Tx and Rx are on the same side of the surface $\S$, i.e., we focus our attention on modeling reflections from the surface $\S$. Therefore, $y_T$ and $y_R$ take positive values, while there is no restriction on the values taken by $x_T$ and $x_R$.

Tx is modeled as a point source that emits cylindrical electromagnetic (EM) waves through the vacuum. The EM waves emitted by Tx travel at the speed of light $c$. The frequency of the EM waves is denoted by $f$, and the wavelength and wavenumber are $\lambda = c/f$ and $k = 2\pi /\lambda$, respectively. We are interested in computing the intensity of the electric field emitted by Tx and observed at an arbitrary point, i.e. Rx, on the positive $y$-axis, with the exception of the location of the point source. In vacuum, the $x$ and $y$ components of the electric field are not coupled, and we assume that $\S$ does not change the polarization of the EM waves. Under these assumptions, we can analyze any components of the electric field. We consider the tangential (to the surface $\S$) component of the electric field, which is denoted by ${E_x}\left( {{x_R},{y_R}} \right)$.

For every point $(x,0) \in \S$, the Tx-to-$\S$ and $\S$-to-Tx distances are denoted by ${d_T}\left( x \right) = \sqrt {{{\left( {x - {x_T}} \right)}^2} + y_T^2}$ and ${d_R}\left( x \right) = \sqrt {{{\left( {x_R - x} \right)}^2} + y_R^2}$, respectively. In particular, ${d_T}\left( x \right)$ is the radius of the wavefront of the EM wave that is emitted by Tx and intersects $\S$ at $(x,0)$, and ${d_R}\left( x \right)$ is the radius of the wavefront of the EM wave that originates from $\S$ at $(x,0)$ and is observed at Rx. With a similar terminology, the angle of incidence of the EM wave at $(x,0) \in \S$ is denoted by $\theta_T(x)$. It represents the angle formed by the $y$-axis and the wavefront of the EM wave that originates from Tx and intersects $\S$ at $(x,0)$. The angle of reflection of the EM wave at $(x,0) \in \S$ is denoted by $\theta_R(x)$, and it represents the angle formed by the $y$-axis and the wavefront of the EM wave that is emitted by $\S$ at $(x,0)$ and is observed at Rx.

For simplicity, we assume ${d_T}\left( x \right) \gg \lambda$ and ${d_R}\left( x \right) \gg \lambda$, which usually hold true in practical setups \cite{Tang-RIS}. The complete analysis is available in \cite{Danufane-TBA}. Under these assumptions, the electric field emitted by the point source (Tx) and observed at Rx in the absence of $\S$ corresponds to the Green function in the plane, which is well approximated as follows \cite{Book}: 
\begin{equation}
{E_x}\left( {{x_R},{y_R}} \right) \approx {E_0}\frac{{\exp \left( { - jk{d_{TR}}\left( {{x_R},{y_R}} \right)} \right)}}{{\sqrt {k{d_{TR}}\left( {{x_R},{y_R}} \right)} }}
\end{equation} 
\noindent where ${E_0} =  - j\sqrt {{1 \mathord{\left/ {\vphantom {1 {\left( {8\pi } \right)}}} \right. \kern-\nulldelimiterspace} {\left( {8\pi } \right)}}} \exp \left( { - {{j\pi } \mathord{\left/ {\vphantom {{j\pi } 4}} \right. \kern-\nulldelimiterspace} 4}} \right)$, $j$ is the imaginary unit, and ${d_{TR}}\left( {{x_R},{y_R}} \right) = \sqrt {{{\left( {{x_R} - {x_T}} \right)}^2} + {{\left( {{y_R} - {y_T}} \right)}^2}}$ is the  distance between Tx and Rx. 

The surface $\S$ is modeled as a spatially-inhomogeneous reflector that is capable of modifying the phase of the incident field. We assume that the electromagnetic properties of the surface $\S$ vary slowly, as compared with the wavelength, along the surface itself. Under this approximation, the surface $\S$ can be well modeled as a local structure: the reflected field at $(x,0) \in \S$ depends, approximately, only on the incident field at $(x,0) \in \S$ \cite{Caloz}. More precisely, the reflection coefficient at $(x,0) \in \S$ can be written as follows:
\begin{equation}
\Gamma_r(x) = C(x)\exp\left(j\Phi(x)\right)
\end{equation}
\noindent where $C(x) \in \mathbb{R}^+$ and $\Phi(x) \in [0,2\pi)$ denote the amplitude and phase of the reflection coefficient, respectively. In this paper, we are interested in analyzing only reflections. Furthermore, we assume that the surface $\S$ operates in the regime of a phase-gradient reflector and, therefore, assume $C(x)=1$.

\section{Electric Field Reflected from $\S$}
Based on the assumptions in Section II, the intensity of the electric field emitted by Tx, reflected by the surface $\S$, and observed at $\left( {{x_R},{y_R}} \right)$ for ${y_R} > 0$ is given as follows.
\begin{theorem}\label{Theorem_1}
Let us assume ${d_T}\left( x \right) \gg \lambda$ and ${d_R}\left( x \right) \gg \lambda$. The electric field ${E_x}\left( {{x_R},{y_R}} \right)$ can be formulated as follows:
\begin{equation} \label{Eq_Efield}
{{E_x}\left( {{x_R},{y_R}} \right)} = {{\mathcal{I}}_0}\int\nolimits_{ - L}^{ + L} {{\mathcal{I}}\left( x \right)\exp \left( { - jk{\mathcal{P}}\left( x \right)} \right)dx}
\end{equation}
\noindent where ${{\mathcal{I}}_0} = {1 \mathord{\left/{\vphantom {1 {\left( {8\pi } \right)}}} \right. \kern-\nulldelimiterspace} {\left( {8\pi } \right)}}$ and:
\begin{equation}
{\mathcal{P}}\left( x \right) = {d_T}\left( x \right) + {d_R}\left( x \right) - \Phi \left( x \right)
\end{equation}
\begin{equation}
{\mathcal{I}}\left( x \right) = \frac{1}{{\sqrt {{d_T}\left( x \right){d_R}\left( x \right)} }}\left( {\frac{{{y_T}}}{{{d_T}\left( x \right)}} + \frac{{{y_R}}}{{{d_R}\left( x \right)}}} \right)
\end{equation}
\end{theorem}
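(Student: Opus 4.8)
The plan is to derive \eqref{Eq_Efield} from the Huygens-Fresnel principle --- equivalently, from the scalar Kirchhoff diffraction formula, or from Green's second identity applied with the two-dimensional free-space Green function --- specialized to the geometry of Section~II and simplified by the large-argument asymptotics that $d_T(x),d_R(x)\gg\lambda$ make available. Informally, $\mathcal{S}$ is replaced by a continuum of secondary cylindrical-wave sources at the points $(x,0)$, $-L\le x\le L$, and the field observed at Rx is the superposition of the wavelets they radiate, each weighted by the reflected field at its location and by an obliquity (inclination) factor. The free-space cylindrical Green function enters twice --- once for the wave travelling from Tx to a point of $\mathcal{S}$, once for the wave travelling from that point to Rx --- which is what produces the product form $1/\sqrt{d_T(x)d_R(x)}$ and the combined phase $d_T(x)+d_R(x)$ in \eqref{Eq_Efield}.

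Concretely, the steps are the following. (i) Record the incident field on $\mathcal{S}$: Tx being a point source of cylindrical waves, it equals the planar Green function at distance $d_T(x)$, namely $E_0\exp(-jkd_T(x))/\sqrt{kd_T(x)}$ with $E_0$ as in the excerpt. (ii) Invoke the local, slowly-varying reflection model of Section~II, so that the reflected field on $\mathcal{S}$ equals $\Gamma_r(x)=\exp(j\Phi(x))$ times this incident field (recall $C(x)=1$). (iii) Represent the reflected field at Rx by Green's second identity over a contour enclosing Rx but not Tx --- the far-arc contribution vanishing by the Sommerfeld radiation condition --- which leaves the Kirchhoff-Helmholtz line integral over $\mathcal{S}$,
\[
E_x(x_R,y_R)=\int_{-L}^{+L}\!\Bigl[\,G\,\partial_{n'}E_x^{\mathrm{R}}-E_x^{\mathrm{R}}\,\partial_{n'}G\,\Bigr]\,dx ,
\]
where $G=G(x_R,y_R;x,0)$ is the 2D free-space Green function, $n'$ the normal to $\mathcal{S}$, and $E_x^{\mathrm{R}}$ the reflected field evaluated on $\mathcal{S}$; the reflection model and the adopted boundary conditions express $E_x^{\mathrm{R}}$ and $\partial_{n'}E_x^{\mathrm{R}}$ on $\mathcal{S}$ through the incident field. (iv) Substitute the large-argument Hankel asymptotics $G\approx\tilde E_0\exp(-jkd_R(x))/\sqrt{kd_R(x)}$ and evaluate the normal derivatives; to leading order in $1/(kd)$ they produce $-jk\cos\theta_T(x)=-jk\,y_T/d_T(x)$ and $-jk\cos\theta_R(x)=-jk\,y_R/d_R(x)$ (the derivatives of the amplitudes are one order smaller and are dropped). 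The two terms then combine into the obliquity factor $y_T/d_T(x)+y_R/d_R(x)$, the two propagation factors multiply to $\exp(-jk(d_T(x)+d_R(x)))/\sqrt{d_T(x)d_R(x)}$ (the $k$'s in the denominators cancelling the $jk$'s from the derivatives), and $\Gamma_r(x)=\exp(j\Phi(x))$ converts the exponent into $-jk\mathcal{P}(x)$. (v) Collect the remaining constants --- $E_0$, $\tilde E_0$, the $\pm j$'s, the factors of $2$ --- and check that they reduce to $\mathcal{I}_0=1/(8\pi)$; this is natural, since $|E_0|=|\tilde E_0|=1/\sqrt{8\pi}$ and the residual phases offset.

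I expect the main difficulty to be bookkeeping rather than conceptual. First, one must pin down which combination of $\cos\theta_T(x)$ and $\cos\theta_R(x)$ appears: the asymmetric Rayleigh-Sommerfeld forms retain only one of the two cosines, whereas the symmetric Kirchhoff (equivalence-principle) form yields $\tfrac{1}{2}\bigl(\cos\theta_T(x)+\cos\theta_R(x)\bigr)$, and it is the latter that matches \eqref{Eq_Efield}; arguing that this symmetric combination is indeed the correct leading-order obliquity factor requires being explicit about the boundary conditions imposed on $\mathcal{S}$, which --- as the excerpt notes --- are treated in full in the companion paper \cite{Danufane-TBA}, so that here one uses only that the slowly-varying assumption on $\Gamma_r$ keeps the physical-optics / equivalent-current representation self-consistent. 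Second, one must justify discarding the amplitude-derivative and $O\!\bigl(1/(kd)^2\bigr)$ terms uniformly on $[-L,+L]$, which is exactly what $d_T(x),d_R(x)\gg\lambda$ buys. Third, the complex prefactors --- the $\exp(-j\pi/4)$ hidden in $E_0$, the $-j$ in the 2D Green function, and the $-jk$'s from the normal derivatives --- must be tracked with care so that they collapse to the positive real number $1/(8\pi)$; getting this constant right is the most error-prone part of the calculation.
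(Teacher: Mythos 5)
Your proposal follows exactly the route the paper names for Theorem~\ref{Theorem_1} --- the Huygens--Fresnel principle formulated via the scalar (Kirchhoff--Helmholtz) theory of diffraction with boundary conditions imposed on $\S$ through the local reflection coefficient --- and in fact supplies more detail than the paper itself, whose proof is a one-line deferral to the companion work \cite{Danufane-TBA}. The outline is sound: the Green-identity representation with the 2D cylindrical Green function entering for both legs, the normal derivatives producing the obliquity sum $y_T/d_T(x)+y_R/d_R(x)$, and the constants collapsing to $\mathcal{I}_0=1/(8\pi)$ all check out, with the bookkeeping caveats you flag being exactly the points that would need care in a full write-up.
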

\begin{proof}
It follows by formulating in mathematical terms the Huygens-Fresnel principle by using the general scalar theory of diffraction and by applying appropriate boundary conditions at the surface $\S$. The details can be found in \cite{Danufane-TBA}.
\end{proof}

The electric field in \eqref{Eq_Efield} is formulated in a simple integral form, which, however, does not explicitly unveil the dependency of the electric field as a function of the transmission distances. Also, the electric field depends on the specific phase shift $\Phi(x)$ applied by the surface $\S$. In the following three sub-sections, we consider three case studies for the choice of $\Phi(x)$. Due to space limitations, only the first two case studies are analyzed in detail. For both cases, we introduce explicit approximate closed-form expressions for the electric field in \eqref{Eq_Efield} for short and long transmission distances. The definition of long transmission distance is given as follows.
\begin{definition}\label{Definition}
Let us define ${d_Q}\left( x \right) = \sqrt {{{\left( {{x_Q} - x} \right)}^2} + y_Q^2}$ and $\sin \left( {{\theta _{Q0}}} \right) = \sin \left( {{\theta _{Q}}\left( 0 \right)} \right) = {{ - q{x_Q}} \mathord{\left/ {\vphantom {{ - q{x_Q}} {{d_{Q0}}}}} \right. \kern-\nulldelimiterspace} {{d_{Q0}}}}$ for $Q=\{T, R \}$, where $q=1$ if $Q=T$ and $q=-1$ if $Q=R$, 	as well as ${d_{Q0}} = \sqrt {x_Q^2 + y_Q^2}$ for $Q=\{T, R \}$. The system is said to operate in the long distance regime if the approximation ${d_Q}\left( x \right) \approx {d_{Q0}} + qx\sin \left( {{\theta _{Q0}}} \right)$ holds true for $Q=\{T, R \}$. Otherwise, it is said to operate in the short distance regime.
\end{definition}

Loosely speaking (with a slightly abuse of terminology), the long and short distance operating regimes can be identified with the far-field and near-field regimes, respectively.

\subsection{$\S$ Acting as a Uniform Reflecting Surface}
In this section, we analyze the case study in which the surface $\S$ operates as a mirror reflector. This operation is obtained by setting $\Phi \left( x \right)=\phi _0$ for $x \in \S$ in \eqref{Eq_Efield}, where $\phi _0 \in [0,2\pi)$ is a fixed phase shift. The following two propositions report approximate closed-form expressions of the intensity of the electric field in \eqref{Eq_Efield} under the assumption of short and long distance regime, respectively.
\begin{proposition} \label{Proposition_1}
In the short distance regime, the intensity of the electric field in \eqref{Eq_Efield} can be approximated as follows:
\begin{equation} \label{Efield_noRISmirror}
\left| {{E_x}\left( {{x_R},{y_R}} \right)} \right| \approx \frac{1}{{\sqrt {8\pi k} }}\frac{1}{{\sqrt {{d_T}\left( {{x_s}} \right) + {d_R}\left( {{x_s}} \right)} }}
\end{equation}
\noindent where ${x_s} \in \left[ {-L,L} \right]$ is the unique solution of the equation:
\begin{equation} \label{Efield_noRISmirror_Xs}
\frac{{{x_s} - {x_T}}}{{{d_T}\left( {{x_s}} \right)}} - \frac{{{x_R} - {x_s}}}{{{d_R}\left( {{x_s}} \right)}} = 0
\end{equation}
\end{proposition}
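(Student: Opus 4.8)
The plan is to evaluate the oscillatory integral \eqref{Eq_Efield} by the stationary phase (saddle point) method, which is legitimate because $kd_T(x)$ and $kd_R(x)$ are large by the standing hypothesis $d_T(x),d_R(x)\gg\lambda$. First I would observe that for $\Phi(x)=\phi_0$ the phase reduces to $\mathcal{P}(x)=d_T(x)+d_R(x)-\phi_0$, so the constant $\phi_0$ only contributes a unit-modulus factor $e^{jk\phi_0}$ that drops out of $\left|E_x(x_R,y_R)\right|$; moreover $\mathcal{I}(x)$ is smooth and strictly positive on $[-L,L]$ since $y_T,y_R>0$, so the only rapidly oscillating factor is $e^{-jk\mathcal{P}(x)}$.

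Next I would locate the stationary point. Differentiating gives $\mathcal{P}'(x)=\dfrac{x-x_T}{d_T(x)}+\dfrac{x-x_R}{d_R(x)}$, and $\mathcal{P}'(x)=0$ is exactly \eqref{Efield_noRISmirror_Xs}; rewritten, it states $\sin\theta_T(x_s)=\sin\theta_R(x_s)$, i.e. the dominant contribution comes from the specular point where incidence and reflection angles coincide. Differentiating once more and using $d_T(x)^2-(x-x_T)^2=y_T^2$ (and the analogue for $R$) yields the compact form $\mathcal{P}''(x)=\dfrac{y_T^2}{d_T(x)^3}+\dfrac{y_R^2}{d_R(x)^3}>0$. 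Hence $\mathcal{P}$ is strictly convex, $\mathcal{P}'$ is strictly increasing, and there is at most one zero; under the short-distance geometry considered here it lies inside $[-L,L]$, which gives the uniqueness of $x_s$. This is also the point where the short-distance regime is used: the curvature of $d_T,d_R$ is retained, so $\mathcal{P}$ possesses a genuine nondegenerate interior critical point with $\mathcal{P}''(x_s)\ne 0$ (in the long-distance regime $\mathcal{P}$ is essentially affine in $x$ and the integral is instead governed by the endpoints $\pm L$, giving a different decay law).

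Then I would apply the first-order stationary phase formula
$\int_{-L}^{+L}\mathcal{I}(x)e^{-jk\mathcal{P}(x)}\,dx\approx \mathcal{I}(x_s)\sqrt{\tfrac{2\pi}{k\,\mathcal{P}''(x_s)}}\,e^{-jk\mathcal{P}(x_s)}e^{-j\pi/4}$
and multiply by $\mathcal{I}_0=1/(8\pi)$, so that $\left|E_x(x_R,y_R)\right|\approx \tfrac{1}{8\pi}\,\mathcal{I}(x_s)\sqrt{2\pi/(k\,\mathcal{P}''(x_s))}$. The crucial algebraic simplification is that at $x_s$ the incidence and reflection angles are equal (equal sines, with both angles in $(-\pi/2,\pi/2)$ since $y_T,y_R>0$), hence $\cos\theta_T(x_s)=\cos\theta_R(x_s)$, i.e. $y_T/d_T(x_s)=y_R/d_R(x_s)=:c$. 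Substituting gives $\mathcal{I}(x_s)=2c/\sqrt{d_T(x_s)d_R(x_s)}$ and $\mathcal{P}''(x_s)=c^2\big(d_T(x_s)+d_R(x_s)\big)/\big(d_T(x_s)d_R(x_s)\big)$, so $\mathcal{I}(x_s)/\sqrt{\mathcal{P}''(x_s)}=2/\sqrt{d_T(x_s)+d_R(x_s)}$; collecting the numerical constants via $\tfrac{1}{8\pi}\cdot 2\cdot\sqrt{2\pi/k}=1/\sqrt{8\pi k}$ produces \eqref{Efield_noRISmirror}.

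The main obstacle is not the calculus above but making the asymptotics rigorous: bounding the remainder in the stationary phase expansion uniformly, in particular ruling out a near-cancellation when $x_s$ lies close to $\pm L$ (where endpoint terms would compete with the saddle-point term), and quantifying in what sense the short-distance hypothesis keeps $\mathcal{P}''(x_s)$ bounded away from zero so that the leading term indeed dominates. I would also be careful to justify the angle identity $\theta_T(x_s)=\theta_R(x_s)$, since the entire collapse to the clean form \eqref{Efield_noRISmirror} rests on it.
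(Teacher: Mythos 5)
Your proposal is correct and follows exactly the route the paper indicates: applying the stationary phase method to \eqref{Eq_Efield}, identifying the specular point via $\mathcal{P}'(x_s)=0$, and using $\mathcal{P}''(x)=y_T^2/d_T(x)^3+y_R^2/d_R(x)^3>0$ together with the equal-cosine identity at $x_s$ to collapse $\mathcal{I}(x_s)/\sqrt{\mathcal{P}''(x_s)}$ to $2/\sqrt{d_T(x_s)+d_R(x_s)}$ and recover the constant $1/\sqrt{8\pi k}$. The paper itself only cites the stationary phase method and defers details to the companion journal paper, so your worked-out derivation (including the caveats about endpoint contributions and the degenerate long-distance case) supplies precisely the missing details.
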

\begin{proof}
It follows from \eqref{Eq_Efield} by applying the stationary phase method. The details can be found in \cite{Danufane-TBA}.
\end{proof}
\begin{remark}
Proposition \ref{Proposition_1} holds true only if \eqref{Efield_noRISmirror_Xs} has at least one solution ${x_s} \in \left[ {-L,L} \right]$. The case study when this does not hold true  can be found in \cite{Danufane-TBA}. Similar comments hold for similar case studies analyzed in the following sub-sections.
\end{remark}

From Proposition \ref{Proposition_1}, the following conclusions follow.
\begin{itemize}
\item In the short distance regime, the surface $\S$ behaves as a specular mirror. In particular, the (end-to-end) intensity of the electric field reflected from the surface decays as a function of the reciprocal of the square root of the sum of the Tx-to-$\S$ and $\S$-to-Rx distances. The presence of the square root originates from the assumption of two-dimensional space (see the emitted field in (1)).
\item Equation \eqref{Efield_noRISmirror} can be regarded as an approximation of \eqref{Eq_Efield} under the condition of geometric optics propagation. More precisely, \eqref{Efield_noRISmirror} unveils that the intensity of the electric field is approximately the same as that obtained from a single ray (i.e., the direction of propagation of the wavefront of the EM wave) that is obtained from the two line segments that connect Tx with the point ${x_s} \in \left[ {-L,L} \right]$ that fulfills \eqref{Efield_noRISmirror_Xs}, and the latter point with Rx. Therefore, the point ${x_s}$ can be referred to as reflection point.
\item From the definition of angles of incidence and reflection, we have ${{\left( {{x_s} - {x_T}} \right)} \mathord{\left/ {\vphantom {{\left( {{x_s} - {x_T}} \right)} {{d_T}\left( {{x_s}} \right)}}} \right. \kern-\nulldelimiterspace} {{d_T}\left( {{x_s}} \right)}} = \sin \left( {{\theta _T}\left( {{x_s}} \right)} \right)$ and ${{\left( {{x_R} - {x_s}} \right)} \mathord{\left/ {\vphantom {{\left( {{x_R} - {x_s}} \right)} {{d_R}\left( {{x_s}} \right)}}} \right. \kern-\nulldelimiterspace} {{d_R}\left( {{x_s}} \right)}} = \sin \left( {{\theta _R}\left( {{x_s}} \right)} \right)$, respectively. From \eqref{Efield_noRISmirror_Xs}, this implies that the angles of incidence and reflection coincide at the reflection point ${x_s} \in \left[ {-L,L} \right]$. In other words, \eqref{Efield_noRISmirror} allows us to retrieve the law of reflection.
\end{itemize}
\begin{proposition} \label{Proposition_2}
In the long distance regime, the intensity of the electric field in \eqref{Eq_Efield} can be approximated as follows:
\begin{equation} \label{Efield_noRISscatterer}
\begin{split}
\left| {{E_x}\left( {{x_R},{y_R}} \right)} \right| &\approx \frac{1}{{4\pi }}\left| \frac{{\cos \left( {{\theta _{T0}}} \right) + \cos \left( {{\theta _{R0}}} \right)}}{{\sqrt {{d_{T0}}{d_{R0}}} }}\right|\\
& \times \left|\frac{{\sin \left( {kL\left( {\sin \left( {{\theta _{T0}}} \right) - \sin \left( {{\theta _{R0}}} \right)} \right)} \right)}}{{k\left( {\sin \left( {{\theta _{T0}}} \right) - \sin \left( {{\theta _{R0}}} \right)} \right)}}\right|
\end{split}
\end{equation}
\noindent where $\cos \left( {{\theta _{Q0}}} \right) = {{{y_Q}} \mathord{\left/ {\vphantom {{{y_Q}} {{d_{Q0}}}}} \right. \kern-\nulldelimiterspace} {{d_{Q0}}}}$ for $Q=\{T, R \}$.
\end{proposition}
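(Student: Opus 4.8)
The plan is to specialize the integral representation of Theorem~\ref{Theorem_1} to the uniform-phase case and then linearize it by means of the far-field expansion of Definition~\ref{Definition}. First I would substitute $\Phi(x)=\phi_0$ into \eqref{Eq_Efield}, so that $\mathcal{P}(x)=d_T(x)+d_R(x)-\phi_0$ carries no $x$-dependence other than through the two distances. In the long distance regime, Definition~\ref{Definition} lets me replace $d_T(x)$ by $d_{T0}+x\sin(\theta_{T0})$ and $d_R(x)$ by $d_{R0}-x\sin(\theta_{R0})$, which turns the phase into an affine function of $x$:
\[
\mathcal{P}(x)\approx\big(d_{T0}+d_{R0}-\phi_0\big)+x\big(\sin(\theta_{T0})-\sin(\theta_{R0})\big).
\]

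Next I would freeze the amplitude $\mathcal{I}(x)$ at $x=0$. The justification is the standard asymmetry between phase and amplitude in a diffraction integral: the phase enters \eqref{Eq_Efield} multiplied by the large parameter $k$ (recall $kd_T(x),kd_R(x)\gg1$), so its linear term in $x$ must be retained, whereas $\mathcal{I}(x)$ varies only on the much longer scale of $d_{T0},d_{R0}$ and is well approximated by $\mathcal{I}(0)$. Using $y_Q/d_{Q0}=\cos(\theta_{Q0})$ for $Q\in\{T,R\}$, this gives $\mathcal{I}(x)\approx\big(\cos(\theta_{T0})+\cos(\theta_{R0})\big)/\sqrt{d_{T0}d_{R0}}$, a constant. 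With both approximations in force, \eqref{Eq_Efield} collapses to a constant prefactor times an elementary integral:
\[
E_x(x_R,y_R)\approx\frac{1}{8\pi}\,\frac{\cos(\theta_{T0})+\cos(\theta_{R0})}{\sqrt{d_{T0}d_{R0}}}\,e^{-jk(d_{T0}+d_{R0}-\phi_0)}\int_{-L}^{+L}e^{-jk\alpha x}\,dx,
\]
with $\alpha=\sin(\theta_{T0})-\sin(\theta_{R0})$.

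Finally I would evaluate $\int_{-L}^{+L}e^{-jk\alpha x}\,dx=2\sin(k\alpha L)/(k\alpha)$, absorb the factor $2$ into $\mathcal{I}_0=1/(8\pi)$ to obtain $1/(4\pi)$, and take the modulus, noting that the exponential factor $e^{-jk(d_{T0}+d_{R0}-\phi_0)}$ has unit modulus; this reproduces \eqref{Efield_noRISscatterer}. In the degenerate case $\alpha=0$ (i.e. $\sin(\theta_{T0})=\sin(\theta_{R0})$) the quotient $\sin(k\alpha L)/(k\alpha)$ is to be read as its limit $L$, which is consistent because the right-hand side of \eqref{Efield_noRISscatterer} is a sinc, hence entire, in $\alpha$.

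The step I expect to be the main obstacle is making the amplitude-freezing and distance-linearization rigorous: one has to bound the error of replacing $\mathcal{I}(x)$ by $\mathcal{I}(0)$ and of truncating each $d_Q(x)$ after its linear term, and to verify that both errors are uniformly negligible on $[-L,L]$ under the far-field condition encoded in Definition~\ref{Definition} (essentially $kL^2/d_{Q0}\ll1$ together with $L\ll d_{Q0}$). Everything after that is a one-line computation.
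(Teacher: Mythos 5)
Your proposal is correct and follows exactly the route the paper indicates: substitute the long-distance linearization $d_Q(x)\approx d_{Q0}+qx\sin(\theta_{Q0})$ of Definition~\ref{Definition} into the phase, freeze the amplitude $\mathcal{I}(x)$ at $x=0$, and evaluate the resulting elementary integral to get the sinc factor with the $2$ absorbed into $\mathcal{I}_0$. The paper only sketches this in one line (deferring details to the companion journal version), and your worked-out computation, including the sign conventions for $q$ and the degenerate case $\sin(\theta_{T0})=\sin(\theta_{R0})$, reproduces \eqref{Efield_noRISscatterer} correctly.
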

\begin{proof}
It follows by using the approximation ${d_Q}\left( x \right) \approx {d_{Q0}} + qx\sin \left( {{\theta _{Q0}}} \right)$ for $Q=\{T, R \}$. The details are in \cite{Danufane-TBA}.
\end{proof}

From Proposition \ref{Proposition_2}, we evince that the surface $\S$ does not behave as a specular mirror for long transmission distances. The scaling law that governs the intensity of the electric field as a function of the distances is, in addition, no straightforward to be identified. To shed light on the impact of the transmission distances, i.e., ${d_{T0}}$ and ${d_{R0}}$ in \eqref{Efield_noRISscatterer}, we consider the case study in which Tx and Rx move along two straight lines such that the angles $\theta_{T0}$ and $\theta_{R0}$ are kept constant (the two angles do not have to be necessarily the same), but ${d_{T0}}$ and ${d_{R0}}$ are different. In this case, we observe from \eqref{Efield_noRISscatterer} that the intensity of the electric field decays as a function of the square root of the product of the distances between Tx and the (center of the) surface $\S$, and the (center of the) surface $\S$ and Rx. In this regime, therefore, the surface $\S$ is better modeled as a scatterer, since its size is relatively small in comparison with the transmission distances involved (i.e., ${d_{T0}}$ and ${d_{R0}}$).

The findings in Propositions \ref{Proposition_1} and \ref{Proposition_2} provide us with evidence that justifies the validity of Theorem \ref{Theorem_1}, and, therefore, substantiate the approach embraced in this paper for modeling and analyzing RISs in wireless networks. This constitutes the departing point of the following two sub-sections.

\subsection{$\S$ Acting as a Reconfigurable Intelligent Surface}
In this section, we analyze the case study in which the surface $\S$ operates as an RIS whose phase $\Phi \left( x \right)$ can be appropriately optimized. In particular, we assume that $\S$ acts as an anomalous reflector that is configured for reflecting the EM waves emitted by Tx towards a given direction. Due to the assumption $C(x)=1$, we implicitly ignore parasitic scattering. To this end, $\Phi \left( x \right)$ in \eqref{Eq_Efield} is chosen as follows:
\begin{equation} \label{Phi_RIS}
\Phi \left( x \right) = \left( {{{\bar \phi }_T} - {{\bar \phi }_R}} \right)x + {\phi _0}/k
\end{equation}
\noindent where $\phi _0 \in [0,2\pi)$ is a fixed phase shift, ${{\bar \phi }_T} =  - {{{{\bar x}_T}} \mathord{\left/ {\vphantom {{{{\bar x}_T}} {\sqrt {\bar x_T^2 + \bar y_T^2} }}} \right. \kern-\nulldelimiterspace} {\sqrt {\bar x_T^2 + \bar y_T^2} }}$, ${{\bar \phi }_R} = {{{{\bar x}_R}} \mathord{\left/ {\vphantom {{{{\bar x}_R}} {\sqrt {\bar x_R^2 + \bar y_R^2} }}} \right. \kern-\nulldelimiterspace} {\sqrt {\bar x_R^2 + \bar y_R^2} }}$, and $\left( {{{\bar x}_T},{{\bar y}_T}} \right)$ and $\left( {{{\bar x}_R},{{\bar y}_R}} \right)$ are parameters that are optimized for obtaining the desired reflection capability, as detailed in further text. 

In contrast with Section III-A, this case study needs more elaboration to unveil the scaling law of the intensity of the electric field as a function of the distances. To this end, we consider the specific setup in which ${{\bar \phi }_T} = \sin \left( {{\theta _T}\left( 0 \right)} \right)$ and ${{\bar \phi }_R} = \sin \left( {{\theta _R}\left( 0 \right)} \right)$, which corresponds to a surface $\S$ that is configured by taking into account the angles of incidence and reflection of the EM with respect to $(0,0)$. Other case studies are analyzed in \cite{Danufane-TBA}. It is worth mentioning that this setup does not necessarily imply $\left( {{{\bar x}_T},{{\bar y}_T}} \right) = \left( {{x_T},{y_T}} \right)$ and $\left( {{{\bar x}_R},{{\bar y}_R}} \right) = \left( {{x_R},{y_R}} \right)$, which would imply that the locations of Tx (the point source) and Rx (the observation point) need to be exactly known. Setups corresponding to different locations of Tx and Rx, but yielding the same angles of incidence and reflection, are included in the considered case study. For example, Tx and Rx move along two straight lines in which the angles with the $y$-axis at $(0,0)$ are kept constant but the distances are not. 

The following two propositions report approximate expressions of the intensity of the electric field in \eqref{Eq_Efield} under the assumption of short and long distance regime, respectively.
\begin{proposition} \label{Proposition_3}
Assume ${{\bar \phi }_T} = \sin \left( {{\theta _T}\left( 0 \right)} \right)$ and ${{\bar \phi }_R} = \sin \left( {{\theta _R}\left( 0 \right)} \right)$. In the short distance regime, the intensity of the electric field in \eqref{Eq_Efield} can be approximated as follows:
\begin{equation} \label{Efield_RISmirror__Distance}
\left| {{E_x}\left( {{x_R},{y_R}} \right)} \right| \approx \frac{1}{{4\sqrt {2\pi k} }}\frac{{\sqrt {1 - \bar \phi _T^2}  + \sqrt {1 - \bar \phi _R^2} }}{{\sqrt {\left( {1 - \bar \phi _R^2} \right){d_{T0}} + \left( {1 - \bar \phi _T^2} \right){d_{R0}}} }}
\end{equation}
\end{proposition}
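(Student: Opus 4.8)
The plan is to evaluate the integral \eqref{Eq_Efield} by the stationary phase method, exploiting the fact that the linear term of the phase profile \eqref{Phi_RIS} is tuned precisely so that the (unique) stationary point of $\mathcal{P}$ falls at the center $x=0$ of the surface. First I would substitute $\Phi(x)=(\bar\phi_T-\bar\phi_R)x+\phi_0/k$ into $\mathcal{P}(x)=d_T(x)+d_R(x)-\Phi(x)$ and differentiate. Since $d_T'(x)=(x-x_T)/d_T(x)=\sin(\theta_T(x))$ and $d_R'(x)=(x-x_R)/d_R(x)=-\sin(\theta_R(x))$, this gives $\mathcal{P}'(x)=\sin(\theta_T(x))-\sin(\theta_R(x))-(\bar\phi_T-\bar\phi_R)$. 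Under the assumptions $\bar\phi_T=\sin(\theta_T(0))$ and $\bar\phi_R=\sin(\theta_R(0))$, this vanishes at $x=0$, so $x_0=0$ is a stationary point; and since the derivative of $(x-x_Q)/d_Q(x)$ equals $y_Q^2/d_Q(x)^3$, we get $\mathcal{P}''(x)=y_T^2/d_T(x)^3+y_R^2/d_R(x)^3>0$, so $\mathcal{P}'$ is strictly increasing and $x_0=0$ is the \emph{unique} stationary point in $[-L,L]$ (this is cleaner than the situation behind Proposition \ref{Proposition_1}).

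Next I would evaluate the relevant quantities at $x=0$. Using $d_T(0)=d_{T0}$, $d_R(0)=d_{R0}$, $\cos(\theta_{Q0})=y_Q/d_{Q0}$ and $\cos^2(\theta_{Q0})=1-\bar\phi_Q^2$, one obtains $\mathcal{P}''(0)=(1-\bar\phi_T^2)/d_{T0}+(1-\bar\phi_R^2)/d_{R0}=[(1-\bar\phi_R^2)d_{T0}+(1-\bar\phi_T^2)d_{R0}]/(d_{T0}d_{R0})$ and $\mathcal{I}(0)=(\sqrt{1-\bar\phi_T^2}+\sqrt{1-\bar\phi_R^2})/\sqrt{d_{T0}d_{R0}}$. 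Because we operate in the short distance regime — so the quadratic term of $\mathcal{P}$ about $x=0$ is not negligible, in contrast with the long distance regime where the linearization of $d_Q(x)$ makes the coefficient of $x$ in $\mathcal{P}$ vanish identically and the stationary phase argument degenerates — and because $x_0=0$ is interior to $[-L,L]$ with $\mathcal{I}$ smooth there, the leading contribution is $\int_{-L}^{L}\mathcal{I}(x)e^{-jk\mathcal{P}(x)}\,dx\approx\mathcal{I}(0)\sqrt{2\pi/(k\mathcal{P}''(0))}\,e^{-jk\mathcal{P}(0)}e^{-j\pi/4}$, the phase factor $e^{-j\pi/4}$ arising from $\mathrm{sign}(\mathcal{P}''(0))=+1$. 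Multiplying by $\mathcal{I}_0=1/(8\pi)$, taking the modulus, cancelling the common factor $\sqrt{d_{T0}d_{R0}}$, and using $\sqrt{2\pi}/(8\pi)=1/(4\sqrt{2\pi})$ yields exactly \eqref{Efield_RISmirror__Distance}.

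The differentiation and substitution are routine; the genuine work lies in justifying the stationary phase estimate. The main obstacle is to show that the endpoint contributions at $x=\pm L$, together with the higher-order Taylor remainders of $\mathcal{P}$ and $\mathcal{I}$ about $x_0=0$, are subdominant relative to the $O(1/\sqrt{k})$ stationary-point term. This is where the short distance regime enters, along with the implicit requirement that $L$ be large enough compared with the width $\sim 1/\sqrt{k\,\mathcal{P}''(0)}$ of the region around $x_0=0$ that contributes constructively; the complementary edge-dominated case would have to be handled separately, analogously to the scenario flagged in the Remark following Proposition \ref{Proposition_1}.
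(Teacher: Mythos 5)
Your proposal is correct and follows the same route the paper indicates: applying the stationary phase method to the integral of Theorem~\ref{Theorem_1}, with the phase profile \eqref{Phi_RIS} forcing the unique stationary point to $x=0$, and your evaluations of $\mathcal{P}''(0)$, $\mathcal{I}(0)$, and the constant $\sqrt{2\pi}/(8\pi)=1/(4\sqrt{2\pi})$ all reproduce \eqref{Efield_RISmirror__Distance} exactly. The paper itself only sketches this (deferring details to the companion journal version), so your write-up, including the caveat about endpoint contributions when $L$ is not large compared with the Fresnel width, is a faithful and slightly more explicit account of the intended argument.
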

\begin{proof}
The proof is similar to Proposition \ref{Proposition_1} \cite{Danufane-TBA}.
\end{proof}
\begin{proposition} \label{Proposition_4}
Assume ${{\bar \phi }_T} = \sin \left( {{\theta _T}\left( 0 \right)} \right)$ and ${{\bar \phi }_R} = \sin \left( {{\theta _R}\left( 0 \right)} \right)$. In the long distance regime, the intensity of the electric field in \eqref{Eq_Efield} can be approximated as follows:
\begin{equation} \label{Efield_RISscatterer__Distance}
\left| {{E_x}\left( {{x_R},{y_R}} \right)} \right| \approx \frac{L}{{4\pi }}\frac{{\sqrt {1 - \bar \phi _T^2}  + \sqrt {1 - \bar \phi _R^2} }}{{\sqrt {{d_{T0}}{d_{R0}}} }}
\end{equation}
\end{proposition}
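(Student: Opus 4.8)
The plan is to start from the integral representation \eqref{Eq_Efield} and substitute the long-distance expansions $d_Q(x)\approx d_{Q0}+qx\sin(\theta_{Q0})$ for $Q=\{T,R\}$, exactly as in the proof of Proposition \ref{Proposition_2}, but now retaining the phase profile $\Phi(x)$ of \eqref{Phi_RIS}. First I would expand the phase term $\mathcal P(x)=d_T(x)+d_R(x)-\Phi(x)$. Using the long-distance approximation, $d_T(x)+d_R(x)\approx d_{T0}+d_{R0}+x\bigl(\sin(\theta_{T0})-\sin(\theta_{R0})\bigr)$, while by hypothesis $\Phi(x)=(\bar\phi_T-\bar\phi_R)x+\phi_0/k$ with $\bar\phi_T=\sin(\theta_T(0))=\sin(\theta_{T0})$ and $\bar\phi_R=\sin(\theta_R(0))=\sin(\theta_{R0})$. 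Hence the linear-in-$x$ terms cancel \emph{exactly} and $k\mathcal P(x)\approx k(d_{T0}+d_{R0})-\phi_0$, i.e.\ the integrand's phase becomes constant in $x$. This is the crux: unlike the mirror case of Proposition \ref{Proposition_2}, here there is no residual oscillation, so the $\operatorname{sinc}$ factor collapses to its value at the origin, which is $L$ (up to the normalization already appearing in \eqref{Efield_noRISscatterer}), giving the factor $L/(4\pi)$.

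Next I would treat the amplitude factor $\mathcal I(x)$. In the long-distance regime one approximates $d_Q(x)\approx d_{Q0}$ in the denominators and $y_Q/d_Q(x)\approx y_Q/d_{Q0}=\cos(\theta_{Q0})=\sqrt{1-\sin^2(\theta_{Q0})}=\sqrt{1-\bar\phi_Q^2}$, so that $\mathcal I(x)\approx \bigl(\sqrt{1-\bar\phi_T^2}+\sqrt{1-\bar\phi_R^2}\bigr)/\sqrt{d_{T0}d_{R0}}$, a constant. Pulling the constant amplitude and the constant phase out of the integral over $[-L,L]$ leaves $\int_{-L}^{+L}dx=2L$. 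Combining with $\mathcal I_0=1/(8\pi)$ and taking the modulus (the unit-modulus phase $\exp(-jk(d_{T0}+d_{R0})+j\phi_0)$ drops out) yields $|E_x(x_R,y_R)|\approx \frac{2L}{8\pi}\cdot\frac{\sqrt{1-\bar\phi_T^2}+\sqrt{1-\bar\phi_R^2}}{\sqrt{d_{T0}d_{R0}}}=\frac{L}{4\pi}\cdot\frac{\sqrt{1-\bar\phi_T^2}+\sqrt{1-\bar\phi_R^2}}{\sqrt{d_{T0}d_{R0}}}$, which is \eqref{Efield_RISscatterer__Distance}.

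The main obstacle is not any of these elementary manipulations but justifying that the same error bound controlling the long-distance approximation $d_Q(x)\approx d_{Q0}+qx\sin(\theta_{Q0})$ for $|x|\le L$ also controls the product $k\,\mathcal P(x)$ uniformly, so that the neglected quadratic terms $O(kx^2/d_{Q0})$ remain small over the whole aperture — this is precisely the defining condition of the long-distance regime in Definition \ref{Definition}, and the delicate point is that here the leading linear term has been cancelled by design, so the quadratic remainder is the \emph{dominant} phase variation and one must check it is still negligible (equivalently, that $kL^2/d_{Q0}\ll 1$, the Fraunhofer-type condition). Once that is granted, the approximation is uniform on $[-L,L]$ and the integral evaluates to $2L$ times the constant prefactor as claimed. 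The full details, including the precise error estimates, are deferred to \cite{Danufane-TBA}.
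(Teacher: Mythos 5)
Your proposal is correct and follows exactly the route the paper indicates: substitute the long-distance expansion $d_Q(x)\approx d_{Q0}+qx\sin(\theta_{Q0})$ into \eqref{Eq_Efield}, observe that the linear phase terms cancel against $\Phi(x)$ by the choice $\bar\phi_T=\sin(\theta_{T0})$, $\bar\phi_R=\sin(\theta_{R0})$, and integrate the now-constant integrand to obtain the factor $2L$; your constants check out against \eqref{Efield_noRISscatterer} in the limit $\sin(\theta_{T0})-\sin(\theta_{R0})\to 0$. Your remark that the cancellation of the linear term promotes the quadratic remainder to the dominant phase variation, so that the Fraunhofer-type condition $kL^2/d_{Q0}\ll 1$ is the real constraint, is a correct and worthwhile observation that the paper defers to the companion work.
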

\begin{proof}
The proof is similar to Proposition \ref{Proposition_2} \cite{Danufane-TBA}.
\end{proof}

Since ${{\bar \phi }_T}$ and ${{\bar \phi }_R}$ in Propositions \ref{Proposition_3} and \ref{Proposition_4} do not depend on the distances, the following conclusions can be drawn.
\begin{itemize}
\item From \eqref{Efield_RISmirror__Distance}, RISs behave as anomalous mirrors in the short distance regime: the intensity of the electric field decays with the square root of a weighted sum of the distances, but the angles of incidence and reflection can be different.
\item From \eqref{Efield_RISscatterer__Distance}, RISs behave as scatterers in the long distance regime: the intensity of the electric field decays as a function of the square root of the product of the distances.
\item In \eqref{Efield_RISmirror__Distance} and \eqref{Efield_RISscatterer__Distance}, ${{\bar \phi }_T}$ is configured based on the direction of incidence (at $(0,0)$ and with the $y$-axis) of the EM wave emitted by Tx, and ${{\bar \phi }_R}$ is configured based on the desired direction of reflection (at $(0,0)$ and with the $y$-axis) of the EM wave reflected by the RIS. By optimizing ${{\bar \phi }_R}$, RISs can be configured to reflect EM waves towards, predominantly, any directions. The limitations are discussed in, e.g., \cite{Asadchy-Perfect-control}. This is different from \eqref{Efield_noRISmirror}, in which the direction of reflection and incidence coincide. This is the difference between specular and anomalous mirrors. 
\end{itemize}

\subsection{$\S$ Acting as a Passive Reflecting Beamformer}
In this section, we analyze the case study in which the surface $\S$ operates as an RIS whose phase $\Phi \left( x \right)$ is appropriately optimized in order for $\S$ to act as a beamformer. The difference with the previous sub-section can be summarized as follows.
\begin{itemize}
\item In Section III-B, the desired functionality of the RIS consists of reflecting (or steering) the incident EM wave towards a predetermined \textit{direction}. All the  receivers located in the direction of reflection benefit from the RIS. This setup is, therefore, more suitable for RISs that are employed for broadcasting applications \cite{Tang-RIS}.
\item In this sub-section, on the other hand, the desired functionality of the RIS is to \textit{focus} the EM wave towards a predetermined \textit{location}. In this case, a single or a few receivers at specific locations benefit from the RIS. 
\end{itemize}

In particular, we consider that $\S$ acts as a beamformer (or a reflecting lens) that focuses the signal towards a single location $\left( {{{\bar x}_R},{{\bar y}_R}} \right)$. To this end, $\Phi \left( x \right)$ in \eqref{Eq_Efield} is chosen as follows:
\begin{equation} \label{Phi_Beamformer}
\Phi \left( x \right) = \sqrt {{{\left( {x - {x_T}} \right)}^2} + y_T^2}  + \sqrt {{{\left( {x - {{\bar x}_R}} \right)}^2} + \bar y_R^2}
\end{equation}

From a mathematical point of view, $\Phi \left( x \right)$ in Section III-B is optimized such that the first-order derivative of $\mathcal {P} \left( x \right)$ is equal to zero at ${x_s} \in \left[ {-L,L} \right]$ (if it exists). The phase $\Phi \left( x \right)$ in \eqref{Phi_Beamformer} is, by contrast, optimized such that $\mathcal {P} \left( x \right)$ is equal to zero when evaluated at the location of interest, i.e., $\left( {{{\bar x}_R},{{\bar y}_R}} \right)$. The design criterion for optimizing the surface $\S$ is, thus, different.

The intensity of the electric field can be computed, in the long distance regime, by using analytical steps similar to those reported in Sections III-A and III-B \cite{Danufane-TBA}. Due to space limitations, the details are omitted. Numerical illustrations are, however, reported in the next section in order to showcase the difference among the three configurations for the surface $\S$.

\section{Numerical Results}
In this section, we provide some illustrative  numerical results in order to showcase the difference among the three configurations for $\S$ that are elaborated in Section III, and in order to numerically evaluate the intensity of the electric field as a function of the transmission distances. For ease of writing, the directions of incidence and reflection are identified by the angles ${{\theta _{T0}}}$ and ${{\theta _{R0}}}$, respectively. In all simulation results, we consider the following setup: (i) $L=0.75$ m; (ii) $f=28$ GHz; (iii) ${x_T} =  - {d_{T0}}\sin \left( {{\theta _{T0}}} \right)$ and ${y_T} = {d_{T0}}\cos \left( {{\theta _{T0}}} \right)$ with ${\theta _{T0}} = {\pi  \mathord{\left/ {\vphantom {\pi  4}} \right. \kern-\nulldelimiterspace} 4}$; (iv) $\left( {{{\bar x}_T},{{\bar y}_T}} \right) = \left( {{x_T},{y_T}} \right)$; and (v) $\phi_0 = 0$. This setup corresponds to a scenario in which $\S$ is employed in the millimeter-wave frequency band, and its size, $2L$, corresponds to, approximately, the diagonal of a two-dimensional surface of size 1 m$^2$. This is compatible and in agreement with other recent papers and experimental activities \cite{Tang-RIS}, \cite{Khawaja-Coverage}, \cite{DOCOMO}.

More precisely, we consider two case studies.
\begin{itemize}
\item In the first case study, we are interested in illustrating the difference among the three different configurations for $\S$ analyzed in Section III. To this end, we plot the intensity of the electric field emitted by a fixed location (Tx) and observed at different locations $\left( {{x_R},{y_R}} \right)$. The following setup is considered: (i) ${d_{T0}}=11$ m; (ii) ${{\bar x}_R} = {d_{R0}}\sin \left( {{\theta _{R0}}} \right)$ and ${{\bar y}_R} = {d_{R0}}\cos \left( {{\theta _{R0}}} \right)$ with ${d_{R0}} = 5$ m and ${\theta _{R0}} = {\pi  \mathord{\left/ {\vphantom {\pi  3}} \right. \kern-\nulldelimiterspace} 3}$. As for the surface $\S$ in Section III-B, this setup corresponds to reflecting an EM wave that is incident at an angle of $45$ degrees with the $y$-axis towards an angle of $60$ degrees with the $y$-axis. As for the surface $\S$ in Section III-C, this setup corresponds to focusing an EM wave that is incident at an angle of $45$ degrees with the $y$-axis towards the single location $\left( {{{\bar x}_R},{{\bar y}_R}} \right) = \left( {4.33,2.5} \right)$ m; and (iii) the observation region is chosen in the range ${x_R} \in \left[ { - 2,10} \right]$ m and ${y_R} \in \left[ {0,10} \right]$ m.
\item In the second case study, we are interested in illustrating the different scaling law of the intensity of the electric field as a function of the transmission distances, and, in particular, in showcasing the two operating regimes that correspond to short and long transmission distances. To this end, the following setup is considered: (i) ${x_T} =  - {d_0}\sin \left( {{\theta _{T0}}} \right)$ and ${y_T} = {d_0}\cos \left( {{\theta _{T0}}} \right)$; (ii) ${x_R} = {d_0}\sin \left( {{\theta _{R0}}} \right)$ and ${y_R} = {d_0}\cos \left( {{\theta _{R0}}} \right)$ with ${\theta _{R0}} = {\pi  \mathord{\left/ {\vphantom {\pi  4}} \right. \kern-\nulldelimiterspace} 4}$ for the uniform surface in Section III-A and ${\theta _{R0}} = {\pi  \mathord{\left/ {\vphantom {\pi  6}} \right. \kern-\nulldelimiterspace} 6}$ for the RIS in Section III-B; and (iii) the Tx-to-$\S$ and $\S$-to-Rx distances are the same and are in the range ${d_0} \in \left[ 0, 175 \right]$ m. Thus, the end-to-end distance is $2d_0$.
\end{itemize}
\begin{figure}[!t]
\begin{centering}
\includegraphics[width=\columnwidth]{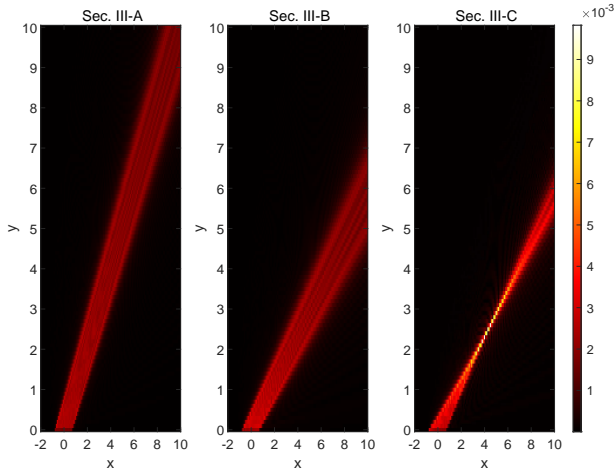}
\caption{Intensity of the electric field from Theorem \ref{Theorem_1}.}
\label{Fig_1}
\end{centering}  
\end{figure}
\begin{figure}[!t]
\begin{centering}
\includegraphics[width=\columnwidth]{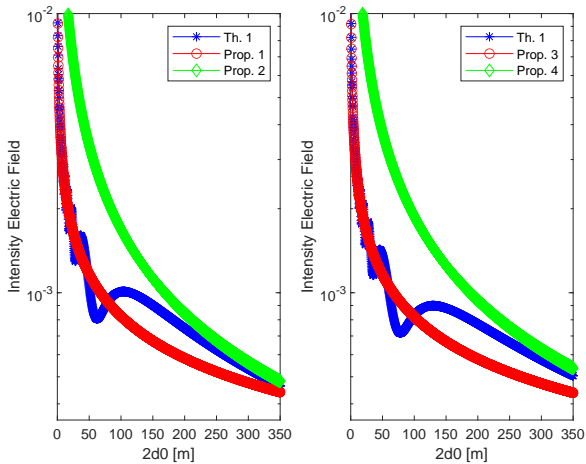}
\caption{Comparison of short and long distance approximations.}
\label{Fig_2}
\end{centering} 
\end{figure}

The results corresponding to the first case study are reported in Fig. \ref{Fig_1}, which shows the intensity of the electric field obtained from \eqref{Eq_Efield}. The figure substantiates the findings in Section III: (i) the angles of incidence and reflections of a uniform reflecting surface are the same (Section III-A); (ii) an RIS configured as described in Section III-B is capable of steering the reflected signal towards desired (anomalous) directions; and (iii) an RIS configured as described in Section III-C is capable of focusing the signal towards desired locations.

The results corresponding to the second case study are reported in Fig. \ref{Fig_2}, which compares \eqref{Eq_Efield} with the approximated closed-form expressions obtained in Sections III-A and III-B. The figure substantiates the findings in Section III. In particular: (i) the closed-form approximations for the short distance regime are accurate for end-to-end distances ($2d_0$) up to 100-150 m; and (ii) the closed-form approximations for the long distance regime are accurate for end-to-end distances ($2d_0$) greater than 200-250 m. For the considered setup, we conclude that RISs are capable of acting as anomalous mirrors for distances of the order of tens of meters. The range of distances for which the approximation holds depend, among other parameters, on the size of the surface and the operating frequency. In general, the larger the size of the surface is and the higher the operating frequency is, the more accurate the approximation as an anomalous mirror becomes, i.e., it can be used for longer transmission distances.

\section{Conclusion}
In this paper, we have leveraged the general scalar theory of diffraction in order to obtain approximate closed-form expressions of the intensity for the electric field reflected by RISs in the short and long transmission distance regimes. We have observed different scaling laws in the two considered operating regimes. The proposed approach and results constitute a first attempt to identify appropriate path-loss models for analyzing the achievable performance of RISs in wireless networks.

\section*{Acknowledgment}
This work was supported in part by the European Commission through the H2020 5GstepFWD project under grant 722429 and  the H2020 ARIADNE project under grant 871464.

\end{document}